\documentclass[submission,copyright,creativecommons]{eptcs}
\providecommand{\event}{ICE 2026} 

\usepackage{iftex}
\usepackage{amsmath}
\usepackage{amsthm}
\usepackage{amssymb}
\usepackage{mathpartir}
\usepackage{mathtools}
\usepackage{xcolor}
\usepackage{cleveref}

\newcommand{\co}[1]{\overline{#1}}
\newcommand{\nil}{\mathbf{0}}
\newcommand{\ckey}[1]{\textcolor{magenta}{[#1]}}
\newcommand{\tup}[1]{\langle#1\rangle}
\newcommand{\std}{\mathtt{std}}
\newcommand{\key}{\mathtt{key}}
\newcommand{\fw}[1]{\xrightarrow{#1}}
\newcommand{\bk}[1]{\xhookrightarrow{#1}}

\newcommand{\ctx}{\mathtt{C}}
\newcommand{\red}[1]{\xmapsto{#1}}
\newcommand{\causeq}{\asymp}
\newcommand{\po}[1]{\mathtt{po}(#1)}
\newcommand{\ind}{\,\mathtt{I}\,}

\newcommand{\rul}[2]{\textcolor{#1}{\underline{\textcolor{black}{#2}}}}
\newcommand{\lang}{\text{CCS}^\mathtt{a}}
\newcommand{\rlang}{\mathtt{r}\lang}

\newcommand{\added}[1]{#1}

\newtheorem{theorem}{Theorem}
\newtheorem{lemma}[theorem]{Lemma}
\newtheorem{proposition}[theorem]{Proposition}
\newtheorem{corollary}[theorem]{Corollary}
\newtheorem{property}[theorem]{Property}
\newtheorem{example}[theorem]{Example}

\theoremstyle{definition}
\newtheorem{definition}[theorem]{Definition}

\ifpdf
  \usepackage{underscore}         
  \usepackage[T1]{fontenc}        
\else
  \usepackage{breakurl}           
\fi

\usepackage{todonotes}

\title{On Asynchrony and Reversibility in CCS \thanks{This work has been supported by 
    the   European Union through the MSCA SE project QCOMICAL (Grant Agreement ID: 101182520).
    }}
    \author{ Hern\'an Melgratti
\institute{ICC - Universidad de Buenos Aires - Conicet, Argentina}
\and
Claudio Antares Mezzina
\institute{Dipartimento di Informatica, \\Universit\`a di Bari Aldo Moro, Italy}
\and
G. Michele Pinna
\institute{Dipartimento di Matematica e Informatica, \\Universit\`a di Cagliari, Italy}
}
\def\titlerunning{On Asynchrony and Reversibility in CCS}
\def\authorrunning{H. Melgratti, C. A. Mezzina, G. M. Pinna}
\begin{document}
\maketitle

\begin{abstract}
Asynchronous communication is a fundamental feature of modern distributed systems, where messages are emitted without requiring immediate synchronization with receivers. In process calculi, this behaviour is typically modelled by separating message emission from message consumption. At the same time, reversible computation has emerged as an important paradigm for analysing concurrent systems, enabling computations to be undone while preserving causal dependencies between actions. While reversible semantics have been extensively studied for synchronous process calculi such as CCS, their integration with asynchronous communication remains largely unexplored.

In this paper we investigate the interaction between asynchrony and reversibility in the setting of CCS. We first introduce $\lang$, an asynchronous variant of CCS in which output actions generate explicit message entities that can later be consumed by matching input actions. We then define $\rlang$, a reversible extension of $\lang$ obtained by adapting the framework of Phillips and Ulidowski \cite{ccsk}. In $\rlang$, prefixes and messages are annotated with unique keys that record message emission and consumption events, allowing computations to be reversed while preserving causal dependencies.
We show that the resulting reversible semantics satisfies causal consistency, ensuring that computations can be reversed exactly up to causal equivalence. The proof relies on the axiomatic framework for reversible computation proposed by Lanese et al. \cite{LanesePU24}.
\end{abstract}

\section{Introduction}

Concurrency theory provides the foundations for understanding systems composed
of multiple interacting computational entities. Over time, several formal
approaches have been developed, including Petri nets, event structures, and
process calculi and algebras. Among these, the Calculus of Communicating
Systems (CCS) \cite{Milner80} is a seminal reference for reasoning about
communication, synchronisation, and behavioral equivalence in distributed
systems. CCS models systems as collections of sequential processes that
interact by exchanging messages through communication channels.
%
Concretely, a process ready to receive a message on channel $a$ and then
continue as $P$ is written $a.P$, while a process ready to send a message on
$a$ and then continue as $Q$ is written $\overline{a}.Q$. Here, $a$ denotes an
input prefix and $\overline{a}$ an output prefix; prefixes capture the idea
that a process must perform the corresponding action before evolving into its
continuation. Processes can be composed in parallel using the operator
$\parallel$, so that $P \parallel Q$ represents the concurrent execution of
$P$ and $Q$. When complementary actions are enabled, they may interact via a
reduction step, for instance:
$a.P \parallel \co{a}.Q \fw{\tau} P \parallel Q$, which models the
synchronisation and consumption of the communication action.
This reduction embodies a synchronous view of communication, where message
exchange occurs atomically through rendezvous.

In contrast, real-world distributed systems typically communicate
asynchronously, with messages sent without requiring immediate synchronisation
with a receiver. Buffering mechanisms further enable communication to occur
even when the receiver is not ready. Modern distributed programming languages
embrace this paradigm. For instance, in Go (\cite{Gocit}), communication
through channels allows processes to send messages that may be buffered and
consumed later by receivers. Similarly, Erlang (\cite{Erlangcit}) relies on
asynchronous message passing, where messages are placed in process mailboxes
and consumed when the receiver performs a matching input. These mechanisms
naturally correspond to a semantics in which message emission and message
consumption are distinct events.

This mismatch between the synchronous semantics of CCS and the asynchronous
nature of practical systems has motivated  the development of asynchronous process
calculi. There exist two main approaches to modelling asynchrony, differing in
how (non-blocking) output actions are represented \cite{BorealeNP98}. In the
first approach, outputs are treated as processes, typically enforced via a
syntactic restriction whereby output prefixes have no continuation. This is
the case, for example, in asynchronous CCS (ACCS
\cite{BorealeNP98,BorealeNP02}) and the asynchronous $\pi$-calculus
\cite{api}. In the second approach, output prefixes are allowed to have
continuations, and the operational semantics generates the corresponding
message processes. This is the case in CCS with Linda-like communication
primitives \cite{BusiGZ98} and in asynchronous session types \cite{HondaYC08}.

In this work, we follow the second approach: we retain the standard CCS prefix
syntax for outputs and let the operational semantics generate explicit message
processes representing messages in transit. This choice keeps the language
close to the structure of programming languages while modelling asynchronous
communication through the presence of explicit message entities. In
particular, output actions are no longer interpreted as one half of a
rendezvous, but as independent emission events that generate persistent
message entities. Operationally, this corresponds to an emission rule where
the output action materialises as a message process $\tup{a}$ that can later
be consumed by a matching input.
Communication is thus decomposed into two causally distinct steps: message
emission and message reception. In this model, the communication medium can be
understood as a ``bag'' of messages (output actions), waiting to be consumed
by corresponding input actions. Communication is then captured by the
following two reductions, representing output on the left and input on the
right:
\begin{eqnarray}
  \co{a}.P \fw{\tau} P \parallel \tup{a} \label{rule:emit}
  \hspace{3cm}
   a.P \parallel \tup{a}  \fw{\tau} P
\end{eqnarray}
%

%

The decoupling of message emission and consumption makes it harder to analyse
executions, diagnose errors, and recover from undesired states. This motivates
the study of mechanisms that allow computations not only to proceed forward,
but also to be reversed in a controlled manner.
Reversibility has emerged as a fundamental concept in the theory of
computation~\cite{cost}. \added{Reversibility allows a system to undo
its computational steps. In a sequential setting this is easily achievable by using the so-called
Bennett embedding \cite{bennett}. In concurrent systems, however, reversibility is more challenging to define and implement, due to the presence of independent and potentially interacting processes. In such settings, it is necessary to keep track of causal dependencies between actions in order to ensure that only those steps that do not violate causality can be reversed. 
}
This notion, known as \emph{causal-consistent
  reversibility}~\cite{rccs}, has proved useful in debugging concurrent
systems~\cite{fase,HoeyU19,LaneseSUS22}, where reversing erroneous
computations is more natural than replaying from scratch; in biochemical
modelling~\cite{MelgrattiMP22,KuhnU22}, given that many biochemical reactions
are inherently reversible; in fault tolerance, where recovery requires
controlled rollback to consistent
states~\cite{LaneseMV26,MezzinaTY25,VassorS18,LaneseLMSS13}; in quantum computing, where
reversibility is forced by the laws of physics \cite{FrankS21}; and in low-energy computation
inspired by thermodynamic considerations~\cite{Landauer61}.
In the last decade, several reversible extensions of process calculi have been
proposed, enriching classical models with mechanisms to record and undo
computational history while preserving causal dependencies between actions. A
seminal contribution in this direction is Reversible CCS (RCCS) by Danos and
Krivine~ \cite{rccs}, which extends CCS with explicit memories that record
past actions and enable undos. Phillips and Ulidowski proposed a general
framework for equipping process calculi with reversibility, applicable to
calculi whose operational semantics is specified in a suitable SOS \added{(structural operational semantics)} 
format~\cite{ccsk}. The main idea of their approach is to keep the structure
of processes unchanged during execution--making the operators of the calculus
\emph{static}--and to record the information needed for reversing computations
by attaching unique \emph{keys} to communications. When this framework is
instantiated for CCS, it yields CCSK (CCS with communication keys).
This model is closely related to Reversible CCS (RCCS): the two calculi are
equivalent in terms of labelled transition system (LTS)
isomorphism~\cite{acta}, differing only in the way history information is
recorded.
Despite the significant progress in reversible process
calculi~\cite{MezzinaSGHHCLMSSU20}, the integration of reversibility with
asynchronous communication remains less explored, with the exception of
asynchronous RCCS \cite{CardelliL11}, and asynchronous higher order $\pi$-calculus \cite{LaneseMS16}. \added{ Both \cite{CardelliL11,LaneseMS16} fall in the first approach of modelling asynchrony.}

 In asynchronous
settings, where communication is split into distinct emission and consumption
events, this decoupling introduces additional challenges for reversible
semantics. In particular, reversing a computation requires restoring not only
the processes involved, but also the correct configuration of messages in
transit, while preserving the causal dependencies between emission and
reception events.
As a result, designing a reversible semantics for asynchronous communication
demands a careful treatment of the information used to track causality.
Existing approaches developed for synchronous calculi do not directly apply in
this setting. For instance, the framework of Phillips and Ulidowski
\cite{ccsk} relies on a fixed SOS format, which is not compatible with rules
such as emission, where the syntactic structure of processes may change. On
the other hand, RCCS \cite{rccs} is based on memories that record complete
synchronisation events, whereas asynchronous communication involves partial
events that are split across emission and consumption.

In this paper, we investigate the integration of reversibility with asynchronous communication in the setting of CCS. We first introduce $\lang$,  an asynchronous variant of CCS in which output actions generate explicit message processes that can be consumed at a later stage. We then define a reversible extension of $\lang$ enriched with key annotations that record message emission and consumption events.

The remainder of the paper is structured as follows. In  Section \ref{sec:accs} the syntax and the semantics of $\lang$ are presented; while in Section \ref{sec:raccs} the syntax and semantics of
$\rlang$ are presented. In Section \ref{sec:prp}, we discuss the main properties of $\rlang$, while Section \ref{sec:ex} demonstrates how the framework can be used to model the leader election problem. Finally,
Section \ref{sec:conc} 
concludes the paper with a discussion of the related work.

\section{Asynchronous CCS}\label{sec:accs}

In this section, we introduce an asynchronous variant of CCS, which we refer
to as $\lang$. In $\lang$, message sending and reception are not
synchronised~\cite{BaldanBGM15}. Messages are transmitted through a medium
until they reach their destination. Consequently, sending is non-blocking,
allowing a process to send regardless of the receiver's state, whereas
receiving is blocking because processes must wait for a message to arrive.

\begin{figure}[th]
\begin{align*}
(\mathit{Prefixes}) \quad \mu,\eta ::= \,\,& a \,\mid \, \overline{a} \,\mid \, \tau\\
(\mathit{Processes}) \quad P,Q ::=\,\, &\nil \,\mid \,\displaystyle{\sum}_{i\in I}\mu_i.P_i \,\mid \, P \parallel Q  \,\mid \, (\nu a) P \,\mid \, A\\
(\mathit{Systems})\quad R,S ::= \,\,&  \tup{a} \,\mid \, P \,\mid \, R \parallel S \,\mid \,(\nu a) R
\end{align*}
\caption{$\lang$ syntax}
\label{fig:accs-syn}
\end{figure}

Let $\mathcal{N} = \{a, b, c, \ldots\}$ be a set of \emph{names}, and let
$\overline{\mathcal{N}} = \{\overline{a} \mid a \in \mathcal{N}\}$ be the set of its corresponding
\emph{co-names}.
Elements in $\mathcal{N}$ represent input actions, while elements in $\overline{\mathcal{N}}$ represent output actions. We use $\alpha, \beta, \ldots$ to range over $\mathcal{N} \cup \overline{\mathcal{N}}$, and we assume $\overline{\overline{\alpha}} = \alpha$, calling
$\alpha$ and $\overline{\alpha}$ \emph{complementary} actions.
We also consider a special \emph{silent action} $\tau$, \added{and assume that $\tau \not \in \mathcal{N} $}.
The syntax of $\lang$ is reported in Figure \ref{fig:accs-syn}.
A prefix (or action), denoted by  $\mu, \eta$,  is an element of  $\mathtt{Act} = \mathcal{N} \cup \overline{\mathcal{N}} \cup \{\tau\}$.
We let $P, Q, \ldots$ range over \emph{Processes}.
The term $\nil$ represents the idle process, while
 $\sum_{i\in I}\mu_i.P_i$ represents a non-deterministic
choice that begins by performing some prefix $\mu_i$ and continues as $P_i$.
We write $\mu_i.P_i$ for a singleton sum where $I = \{i\}$, and use
$\mu_1.P_1 + \sum_{i \in I \setminus \{1\}} \mu_i.P_i$ to distinguish a specific
branch, assuming guarded choice is associative and commutative.
The term $P \parallel Q$ represents the parallel composition of processes $P$ and $Q$,
while $(\nu a)P$ denotes the process $P$ where the name $a$ is restricted.
We use $A$ for process constants, assuming a unique definition $A \triangleq P$
for each $A$. These constants are employed to model recursive behaviours.

\emph{Systems}, ranged over by $R, S$, consist of the parallel composition of
sent messages (each denoted by $\tup{\cdot}$) and processes.
Analogously to processes, we denote by $R \parallel S$ the parallel composition
of systems, and by $(\nu a)R$ the system in which the name $a$
is restricted.

We use  \emph{term} to refer to either a system or a process.

\begin{definition}[$\lang$ LTS]
The operational semantics of $\lang$ is defined
as the LTS $(Systems,\mathtt{Act},\fw{})$ where the transition relation $\fw{}$ is the smallest relation induced by the rules reported in Figure \ref{fig:accs-sem}.
\end{definition}

\begin{figure}[ht]
\begin{mathpar}
  \inferrule*[left=\scriptsize{(Out)}]
  {}{ \co{a}.P \fw{\tau} P \parallel \tup{a} } \and
  \inferrule*[left=\scriptsize{(Msg)}]
  {}{ \tup{a} \fw{\co a}  \nil } \and
  \inferrule*[left=\scriptsize{(In)}]
  {}{ a.P \fw{ a} P } \and
    \inferrule*[left=\scriptsize{(Tau)}]
  {}{ \tau.P \fw{ \tau} P } \and
\and
      \inferrule*[left=\scriptsize{(Sum)}]
  {j\in I \and P_j \fw{\mu} P'  }
  { \displaystyle{\sum}_{i\in I}P_i \fw{ \mu} P' }
  \and
          \inferrule*[left=\scriptsize{(Par)}]
  {R \fw{\mu} R' }{ R\parallel S \fw{ \mu} R'\parallel S }
\and
      \inferrule*[left=\scriptsize{(Syn)}]
  {R \fw{\alpha} R'  \and S \fw{\co{\alpha}}S'}{ R\parallel S \fw{ \tau} R'\parallel S' }
  \and
        \inferrule*[left=\scriptsize{(Res)}]
  {R \fw{\mu} R' \and \mu \not \in \{a,\co{a}\}}{ (\nu a)R \fw{ \mu} (\nu a)R'} \and
  \inferrule*[left=\scriptsize{(Const)}]
    {A \triangleq P \and P \fw{\mu} P'}{ A \fw{ \mu} P'}
\end{mathpar}
\caption{$\lang$ semantics, symmetric rules for $\textsc{Par}$ and $\textsc{Syn}$ are omitted.}
\label{fig:accs-sem}
\end{figure}

As usual, we write $S \xrightarrow{\mu} S'$ to
denote a transition in $\xrightarrow{}$. The operational rules follow those of
CCS~\cite{Milner80}, with the exception of (\textsc{Out}) and (\textsc{Msg}),
which handle asynchrony. Under rule (\textsc{Out}), the output-prefixed term
$\co a.P$ emits a message $\tup{a}$ which is placed in the communication
media, which is represented as the parallel composition of the message with
the continuation $P$. Note that this emission results in a silent action. Rule
(\textsc{Msg}) models the consumption of a message from the communication
media by a input-prefixed process $a.P$.

\begin{example}\label{ex:intro}
 Let $R = (a.\nil + b.\nil) \parallel \co{a}.\co{b}.\nil $. A possible execution is as follows
 \begin{align*}
 R\fw{\tau} \;& (a.\nil + b.\nil)    \parallel \co{b}.\nil \parallel \tup{a} \\
 \fw{\tau} \;&
 (a.\nil + b.\nil)  \parallel \tup{a} \parallel \tup{b}\parallel \nil\\
 \fw{\tau}  \; &\nil \parallel \nil \parallel \tup{b}
 \end{align*}
 where the second process first emits messages $a$ and $b$ in succession,
 followed by the first process consuming $a$.
\end{example}

\section{Reversible asynchronous CCS}\label{sec:raccs}

In this section, we introduce $\rlang$, a reversible variant of $\lang$. 
\added{To make $\lang$ reversible, we exploit the ideas of \cite{ccsk}: decorate all the executed actions
with keys, and make all the operators of the calculus \emph{static}. Operators such as the prefixing ` $.$' and the $\sum$ are dynamic, and hence forgetful as they disappear after the reduction.
Hence, we will instrument the semantics of $\rlang$ to make all the operators static.
}

The
syntax of $\rlang$ is presented in Figure~\ref{fig:raccs-syn}. Unlike $\lang$,
messages and prefixes in $\rlang$ may be decorated with unique \emph{keys}.
We let $\mathcal{K} = \{i, j, \ldots\}$ be the set of keys.
A message $\ckey{i}\tup{a}$ represents an output produced by a prefix marked
with $i$, while $\ckey{i}\tup{a}\ckey{j}$ denotes a message (originally
produced by $i$) that has been consumed by an input prefix decorated with $j$.
Notably, consumed messages \emph{remain} in the system as inactive processes;
they act as memories, storing the computational history required to revert
both the consumption and the original emission. \emph{Marked prefixes},
denoted by $\mu\ckey{i}$, indicate that the action $\mu$ has already been
executed. 
Note that, since all the operators are static, also sums are static.
This means that a choice may be composed with some marked prefix and some unmarked prefixes.
This will become clear once we introduced the operational semantics.

\begin{figure}[th]
  \begin{align*}
    (Prefixes) \quad \mu ::= & a \,\mid \, \overline{a} \,\mid \, \tau\\
    (Processes) \quad P,Q ::= &\nil \,\mid \,\displaystyle{\sum}_{i\in I}\pi_i.P_i \,\mid \, P \parallel Q  \,\mid \, (\nu a) P \,\mid \, A\\
    (Systems)\quad R,S ::= \,\,&  \ckey{i}\tup{a} \,\mid \,  \ckey{i}\tup{a}\ckey{j} \,\mid \,P \,\mid \, R \parallel S \,\mid \,(\nu a) R\\
    (Actions) \quad \pi ::= & \mu  \,\mid \, \mu\ckey{i}
  \end{align*}
  \caption{$\rlang$ syntax}
  \label{fig:raccs-syn}
\end{figure}

We remark that $\rlang$ generalizes $\lang$, in the sense that each $\lang$
term is also an $\rlang$ term; specifically, these are the $\rlang$ terms that
do not contain keys. We call such processes \emph{standard}, as formally
defined below.

\begin{definition}[System and process keys]
  The set of keys of a $\rlang$ term is inductively defined as follows:
  \begin{align*}
    &\key(\ckey{i}\tup{a}) = \{i\} & \key(\ckey{i}\tup{a}\ckey{j}) =  \{i,j\}
    && \key(R \parallel S) = \key(R) \cup \key(S) \\
    &\key(R\backslash a) = \key(R) &\key(\sum_{i \in I} P_i) = \bigcup_{i\in I} \key(P_i)
    && \key(P \parallel Q) = \key(P) \cup \key(Q) \\
    & \key(\pi\ckey{i}.P) = \{i\} \cup \key(P) &\key(P\backslash a) =  \key(P) 
    &&\key(A) = \emptyset
  \end{align*}
\end{definition}

\begin{definition}[Standard System]\label{def:std}
  A system $R$ is said to be standard, written $\std(R)$, if it contains no
  key, i.e., $\key(R) = \emptyset$.
\end{definition}

The operational semantics of $\rlang$ is defined by two transition relations:
$\fw{}$, which captures forward computation, and $\bk{}$, which captures
backward computation.

\begin{definition}\label{de:ltlrccsa}[$\rlang$ semantics]
  The operational semantics of $\rlang$ is formally defined as the LTS
  $(\mathit{Systems}, \mathtt{Act} \times \mathcal{K}, \red{\ })$, where
  ${\red {\ }} = {\fw{\ } \cup \bk{\ }}$, and $\fw{\ }$ and $\bk{\ }$ are the
  smallest relations generated by the rules reported in
  Figures~\ref{fig:semfw} and \ref{fig:sembk}, respectively.
\end{definition}

We begin by commenting the forward rules shown in Figure~\ref{fig:semfw}. All
rules, except (\textsc{Act}) \added{and (\textsc{Sum})}, mirror those of the non-reversible language in
Figure~\ref{fig:accs-sem}, but are enriched with key annotations to track
causality. Moreover, executed prefixes are persistent, i.e., they are not
consumed by reduction.
\begin{figure}[th]
\begin{mathpar}
  \inferrule*[left=\scriptsize{(Out)}]
  {\std(P)}{ \co{a}.P \fw{\tau[i]} \co{a}\ckey{i}.P \parallel \ckey{i}\tup{a} }
  \and
  \inferrule*[left=\scriptsize{(Msg)}]
  {}{ \ckey{i}\tup{a} \fw{\co a[j]}  \ckey{i}\tup{a}\ckey{j} }
  \and
  \inferrule*[left=\scriptsize{(In)}]
  {\std(P)}{ a.P \fw{ a[i]} a\ckey{i}.P }
  \and
    \inferrule*[left=\scriptsize{(Tau)}]
  {\std(P)}{ \tau.P \fw{ \tau[i]} \tau\ckey{i}.P }
  \and
    \inferrule*[left=\scriptsize{(Act)}]
  {P \fw{ \eta[j]} P' \and i\neq j}
  { \mu\ckey{i}.P \fw{ \eta[j]} \mu\ckey{i}.P' }
  \and
\added{
      \inferrule*[left=\scriptsize{(Sum)}]
  { P_z \fw{\mu[k]} P_z'  \and \forall j \in (I\setminus \{z\}) .(P'_j = P_j \wedge \std(P_j))}
  { \displaystyle{\sum}_{i\in I}P_i \fw{ \mu[k]} \displaystyle{\sum}_{i\in I}P'_i }
  }
  \and
          \inferrule*[left=\scriptsize{(Par)}]
  {P \fw{\mu[i]} P' \and i \not\in \key(Q)}{ P\parallel Q \fw{ \mu[i]} P'\parallel Q }
\and
      \inferrule*[left=\scriptsize{(Syn)}]
  {P \fw{\alpha[i]} P'  \and Q \fw{\co{\alpha}[i]}Q'}{ P\parallel Q \fw{ \tau[i]} P'\parallel Q' }
  \and
        \inferrule*[left=\scriptsize{(Res)}]
  {P \fw{\mu[i]} P' \and \mu \not \in \{a,\co{a}\}}{ (\nu a)P \fw{ \mu[i]} (\nu a)P'} \and
  \inferrule*[left=\scriptsize{(Const)}]
    {A \triangleq P \and P \fw{\mu[i]} P'}{ A \fw{ \mu[i]} P'}

\end{mathpar}
\caption{$\rlang$ forward semantics, symmetric rules for \textsc{Par} and \textsc{Syn} are omitted.}
\label{fig:semfw}
\end{figure}

Rule (\textsc{Out}), which models message emission, uses a key $i$ to bind the
emitted message, which is tagged with $i$, to the prefix that produced it.
Notably, the prefix is preserved by the reduction and annotated with $i$,
indicating that it has been executed. This establishes a link between the
message and its generating prefix, which is essential for reversing this
computation step. Note that the used key is reflected in the transition label.
Rule (\textsc{Msg}) governs message consumption. In contrast to synchronous
reversible semantics, where causality is associated with a single
communication event, here causality is distributed across two distinct steps:
emission and consumption. To account for this, it is necessary to record both
the sender and the receiver of a message. Accordingly, a consumed message is
represented as $\ckey{i}\tup{a}\ckey{j}$, indicating that it was emitted by an
output prefix marked with $i$ and consumed by an input prefix marked with $j$.
Rule (\textsc{In}) uses a key for decorating the input prefix. Finally, rule
(\textsc{Tau}) records the execution of a $\tau$-prefix.

Rule (\textsc{Act}) is the only additional rule, introduced to account for the
persistency of executed prefixes. It inductively enables actions to be
propagated through prefixes that have already been executed.
The second premise requires that the key $i$ decorating the already executed
prefix be different from the key $j$ associated with the action that is being
performed. This condition is essential to ensure that each reduction step uses
a fresh key, thereby preserving the uniqueness of keys.
Rule (\textsc{Sum}) captures the behaviour of a non-deterministic choice. If
the process is standard (see Definition~\ref{def:std}), then the choice has
not yet been resolved \added{and all the branches do not contain marked prefixes}. \added{In this case,  any of the branches may proceed.} \added{Once a branch executes, its prefix is marked while the other branches will remain as a decoration of the executed branch. From now on, this branch is allowed to execute forward (and backward), and the other branches remain as a decoration of the executing branch.}
 If the choice is resolved, only the selected (non-standard) branch is allowed to
execute.
In this way, the choice operator, together with the discarded branches, is
preserved in the syntax and acts as a decoration of the forward computation.
This information is crucial for enabling the reversal of a choice\added{, and this is how the sum operator is rendered \emph{static}.}
The additional premise in rule (\textsc{Par}) ensures that the key in the
label, that is, the one assigned to the action being executed, has not been
previously used within the parallel component $Q$. This condition enforces the
uniqueness of keys.
Rules (\textsc{Syn}), (\textsc{Res}) and (\textsc{Const}) are 
straightforward adaptations of the corresponding rules in Figure
\ref{fig:accs-sem}.


\begin{example}\label{ex:fw}
  Consider the system $R$ and the execution shown in Example~\ref{ex:intro}.
  The same execution can be mimicked in $\rlang$ as follows
  \begin{align*}
    R\fw{\tau[i]} \;
    & (a.\nil + b.\nil) \parallel \co{a}\ckey{i}.\co{b}.\nil \parallel \ckey{i}\tup{a} \\
    \fw{\tau[j]} \;
    & (a.\nil + b.\nil) \parallel \co{a}\ckey{i}.(\co{b}\ckey{j}.\nil \parallel
      \ckey{j}\tup{b})\parallel \ckey{i}\tup{a}\\
    \fw{\tau[w]} \;
    & (\rul{red}{a\ckey{w}}.\nil + b.\nil) \parallel
      \rul{red}{\co{a}\ckey{i}}.(\co{b}\ckey{j}.\nil \parallel \ckey{j}\tup{b})\parallel
      \rul{red}{\ckey{i}\tup{a}\ckey{w}} = S
  \end{align*}

  where the parts of the system involved in the communication are
  \rul{red}{underlined}. As indicated by the message
  $\ckey{i}\tup{a}\ckey{w}$, the communication takes place in two distinct
  steps: the emission, decorated by $i$, and the consumption, decorated by $w$

  It is worth noting that prefixes are not consumed by reduction, but rather
  marked with keys. Moreover, emitted messages are placed in parallel with the
  emitting prefix. This changes the structure of the process and is therefore not
  compliant with the framework of~\cite{ccsk}, which requires operators to
  remain static.
  Similarly, discarded branches of a choice are not removed; instead, they are
  retained in the syntax as a form of decoration. For instance, after the
  consumption of the message $\ckey{i}\tup{a}$, the process $a.\nil + b.\nil$
  evolves into $a\ckey{w}.\nil + b.\nil$, where the branch $+, b.\nil$ serves
  as a decoration/context of the process $a\ckey{w}.\nil$.
\end{example}

The backward rules are reported in Figure~\ref{fig:sembk}. Each forward
rule has a corresponding backward rule that restores the configuration
preceding the action. In particular, reversing an emission removes the
generated message and restores the original output prefix, while reversing a
message consumption transforms a consumed message $\ckey{i}\tup{a}\ckey{j}$
back into an available message $\ckey{i}\tup{a}$. The symmetry between forward
and backward rules ensures that the system satisfies the Loop Lemma (Lemma
\ref{lm:loop}), stating that every forward transition can be undone.
The premise $\std(P)$ in rules $(\textsc{Out}^{\bullet})$,
$(\textsc{In}^{\bullet})$, and $(\textsc{Tau}^{\bullet})$ ensures that a
sequential process having several previously executed prefixes reverses the
last one first. \added{Let us note that  the premises in rule (\textsc{Par}$^\bullet$)
check whether the key is not shared with another component. This check is instrumental
to force a synchronisation back, e.g. in the case the key appears also in the component in parallel one has to use rule (\textsc{Syn}$^\bullet$).
}

\begin{figure}[th]
\begin{mathpar}
  \inferrule*[left=\scriptsize{(Out$^\bullet$)}]
  {\std(P)}{  \co{a}\ckey{i}.P \parallel \ckey{i}\tup{a} \bk{\tau[i]}  \co{a}.P}
  \and
  \inferrule*[left=\scriptsize{(Msg$^\bullet$)}]
  {}{ \ckey{i}\tup{a}\ckey{j}  \bk{\co a[j]}\ckey{i}\tup{a}   }
  \and
  \inferrule*[left=\scriptsize{(In$^\bullet$)}]
  {\std(P)}{ a\ckey{i}.P \bk{ a[i]} a.P }
  \and
    \inferrule*[left=\scriptsize{(Tau$^\bullet$)}]
  {\std(P)}{ \tau\ckey{i}.P \bk{ \tau[i]} \tau.P }
  \and
      \inferrule*[left=\scriptsize{(Act$^\bullet$)}]
  {P \bk{ \eta[j]} P' \and i\neq j}
  { \mu\ckey{i}.P \bk{ \eta[j]} \mu\ckey{i}.P' }
   \and
   
   \added{
      \inferrule*[left=\scriptsize{(Sum$^\bullet$)}]
  { P_z \bk{\mu[k]} P_z'  \and \forall j \in (I\setminus \{z\}) .(P'_j = P_j \wedge \std(P_j))}
  { \displaystyle{\sum}_{i\in I}P_i \bk{ \mu[k]} \displaystyle{\sum}_{i\in I}P'_i }
  }
  \and
          \inferrule*[left=\scriptsize{(Par$^\bullet$)}]
  {P \bk{\mu[i]} P' \and i \not\in \key(Q)}{ P\parallel Q \bk{ \mu[i]} P'\parallel Q }
\and
      \inferrule*[left=\scriptsize{(Syn$^\bullet$)}]
  {P \bk{\alpha[i]} P'  \and Q \bk{\co{\alpha}[i]}Q'}{ P\parallel Q \bk{ \tau[i]} P'\parallel Q' }
  \and
        \inferrule*[left=\scriptsize{(Res$^\bullet$)}]
  {P \bk{\mu[i]} P' \and \mu \not \in \{a,\co{a}\}}{ (\nu a)P \bk{ \mu[i]} (\nu a)P'} \and
  \inferrule*[left=\scriptsize{(Const$^\bullet$)}]
    {A \triangleq P \and P' \bk{\mu[i]} P}{ P' \bk{ \mu[i]} A}
\end{mathpar}
\caption{$\rlang$ backward semantics, symmetric rules for  \textsc{Par} and \textsc{Syn} are omitted.}
\label{fig:sembk}
\end{figure}

\begin{example}\label{ex:bk}
 Consider now the system $S$ in Example \ref{ex:fw}, which
 can retract its decision of consuming message $\tup{a}$
 and consume the message $\tup{b}$ as follows:
\begin{align*}
S \bk{\tau[w]}  \;&a.\nil + b.\nil    \parallel \co{a}\ckey{i}.(\co{b}\ckey{j}.\nil \parallel \ckey{j}\tup{b})\parallel \ckey{i}\tup{a}\\
\fw{\tau[z]}  \;&a.\nil + b\ckey{z}.\nil    \parallel \co{a}\ckey{i}.(\co{b}\ckey{j}.\nil \parallel \ckey{j}\tup{b}\ckey{z})\parallel \ckey{i}\tup{a}
\end{align*}
It is  worth noting that the message $\ckey{i}\tup{a}$ cannot be reversed in the reached processes, since the prefix $\co{a}\ckey{i}$ cannot be undone after having caused $\co{b}\ckey{j}$. Therefore, in order to revert the emission on $a$, it is first necessary to revert the emission on $b$. This enforces a cause-respecting reversal discipline.
\end{example}

Not all systems that can be written using the grammar in
Figure~\ref{fig:raccs-syn} are meaningful. For instance, $a.b\ckey{i}.\nil$ is
not. Therefore, we introduce the notion of \emph{reachable} systems, namely
those that are derivable from a standard one.

\begin{definition}[Reachable System]
  A system $R$ is said to be \emph{reachable} if it can be derived from a
  standard system using the rules reported in Figures~\ref{fig:semfw} and
  \ref{fig:sembk}.
\end{definition}

From now on, we only consider reachable systems.

\begin{lemma}\label{lem:keys}
Let $R$ and $S$ be two reachable systems. We have:
\begin{itemize}
	\item if $R\fw{\mu[i]} S$ then $\key(S) = \key(R) \cup \{i\}$
	\item if $R\bk{\mu[i]} S$ then $\key(S) = \key(R) \setminus \{i\}$
\end{itemize}
\end{lemma}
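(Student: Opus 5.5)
Both items go by induction on the derivation of the transition. The second item is essentially the mirror image of the first, since every backward rule in Figure~\ref{fig:sembk} mirrors a forward rule in Figure~\ref{fig:semfw}. The plain set equations are easy, but the inductive cases need a stronger statement for the backward direction. We must know that after $R \bk{\mu[i]} S$ the key $i$ no longer occurs anywhere in $S$; otherwise $\key(R)\setminus\{i\}$ would be wrong. The natural strengthening is that reachable systems keep key occurrences \emph{tidy}: every key occurs at most once as a prefix marker, plus possibly on its message(s) ($\ckey{i}\tup{a}$ next to $\co{a}\ckey{i}$, or $\ckey{j}\tup{a}\ckey{i}$ next to $a\ckey{i}$). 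Since the forward and backward rules only manipulate occurrences of the label key, I would first establish this invariant by induction on the reachability derivation. This is where the side conditions $i\notin\key(Q)$ in (\textsc{Par}), $i\neq j$ in (\textsc{Act}), and $\std(P)$ in the axioms are used.

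\textbf{Forward case.} The axioms are immediate. For (\textsc{Out}), $\key(\co{a}.P)=\emptyset$ because $\std(P)$, and the target has keys $\{i\}$. (\textsc{In}) and (\textsc{Tau}) are the same. For (\textsc{Msg}), $\{i\}$ becomes $\{i,j\}$, where $j$ is the label key. For (\textsc{Act}), $\key(\mu\ckey{i}.P')=\{i\}\cup\key(P)\cup\{j\}$ by the induction hypothesis. (\textsc{Sum}) holds because the other summands are unchanged. (\textsc{Par}) and (\textsc{Res}) are direct. For (\textsc{Syn}), both premises carry the same key $i$, so the union gives $\key(R)\cup\{i\}$. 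For (\textsc{Const}), $\key(A)=\emptyset=\key(P)$, because the body of a definition is standard. Here I would note the convention that constant bodies are standard terms, which holds since $\lang$ definitions are used.

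\textbf{Backward case.} I would prove the stronger claim that $R \bk{\mu[i]} S$ implies $i\in\key(R)$, $i\notin\key(S)$, and that all other key occurrences are untouched. For (\textsc{Out}$^\bullet$), the prefix $\co{a}\ckey{i}$ and the message $\ckey{i}\tup{a}$ both disappear, and $\std(P)$ ensures that no other key is lost. (\textsc{In}$^\bullet$) and (\textsc{Tau}$^\bullet$) are similar. For (\textsc{Msg}$^\bullet$), $\{i,j\}$ becomes $\{i\}$, and $j\neq i$ follows from tidiness: the producer key and the consumer key are distinct fresh keys. For (\textsc{Act}$^\bullet$), the hypothesis $i\neq j$ ensures that the outer marker is kept. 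For (\textsc{Par}$^\bullet$), the premise $i\notin\key(Q)$ gives $(\key(P)\setminus\{i\})\cup\key(Q)=(\key(P)\cup\key(Q))\setminus\{i\}$. For (\textsc{Syn}$^\bullet$), the induction hypothesis removes $i$ from both sides. For (\textsc{Const}$^\bullet$), the source $P'$ steps back to a standard $P$, so by the induction hypothesis $\key(P')=\{i\}$ and the result follows.

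\textbf{Main obstacle.} The hard part is showing that $i$ is removed entirely, i.e.\ that no stray occurrence of $i$ survives deeper in the term, for instance inside $P$ in (\textsc{Act}$^\bullet$) or in a leftover message. This is exactly what the uniqueness/tidiness invariant over reachable systems provides. I would prove that invariant first, jointly with the lemma, by induction on the length of the computation from a standard system.
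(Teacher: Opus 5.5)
The paper states this lemma without proof, so there is no argument of the authors to compare against. Your induction on the derivation is the natural route and is correct in substance, including your observation that constant bodies must be standard for (\textsc{Const}).

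One refinement: the ``main obstacle'' is not where you place it. The forward item holds for arbitrary terms, given standard constant bodies. In every compositional backward rule, the equation already follows from the induction hypothesis and the local side condition:
in (\textsc{Act}$^\bullet$) the hypothesis on $P\bk{\eta[j]}P'$ already gives $j\notin\key(P')$;
in (\textsc{Par}$^\bullet$) you use $i\notin\key(Q)$;
and (\textsc{Syn}$^\bullet$), (\textsc{Sum}$^\bullet$), (\textsc{Res}$^\bullet$), (\textsc{Const}$^\bullet$) are equally local.
So no stray occurrence can survive in those cases.

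Reachability is needed at exactly one point, the leaf (\textsc{Msg}$^\bullet$). Rule (\textsc{Msg}) has no freshness premise, so $\ckey{i}\tup{a}\fw{\co a[i]}\ckey{i}\tup{a}\ckey{i}$ is derivable, and for its reverse the claimed equation fails. Hence the only invariant you need is that every consumed message $\ckey{i}\tup{a}\ckey{j}$ in a reachable system has $i\neq j$.

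The reason is structural. (\textsc{Out}) creates the message as the parallel sibling of its emitter $\co a\ckey{i}.P$, and the two stay siblings until (\textsc{Out}$^\bullet$) deletes both. Any derivation that consumes the message therefore passes through that parallel node. There, either the (\textsc{Par}) premise $j\notin\key(\co a\ckey{i}.P)$ or, if (\textsc{Syn}) is used at that node, the (\textsc{Act}) premise $i\neq j$ forces $j\neq i$. Backward steps never create consumed messages, so the invariant is preserved.

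Your tidiness invariant contains this adjacency, so your proof goes through. However, it is stronger than necessary, and its consumer clause is inaccurate. A consumed message stays next to its emitter, not next to the consuming prefix. Moreover, the consumer may be the environment: a label $\co a[k]$ can reach the top via (\textsc{Par}), in which case no input prefix carries that key at all.
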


Every reachable system induces a partial order on its keys, which reflects
the causal execution order of the actions that produced it, as formally
stated in the definition below.

\begin{definition}[Ordering on keys]
The function $\po{\cdot} \colon \mathit{Systems} \to 2^{\mathcal{K} \times \mathcal{K}}$
is inductively defined as follows:
\begin{align*}
    &\po{\ckey{i}\tup{a}\ckey{j}} = \{(i,j)\}
	 \;\;\quad\quad\quad \po{A} = \emptyset &&  \;\;\quad\quad\quad \po{\ckey{i}\tup{a}} = \{(i,i)\} \\
	&\po{R} = \emptyset \;\text{ if } \std(R)  \;\;\quad\quad\quad  \po{(\nu a)R}= \po{R} 
	&&  \;\;\quad \po{R \parallel S} = \po{R} \cup \po{S}\\
	&\quad\quad\po{\mu\ckey{i}.P} = \{(i,j) \mid \,j\in \key(P)\} \cup \po{P} &&  
	\;\;\quad \po{\displaystyle{\sum}_{i\in I}\pi_i.P_i} = \bigcup_{i\in I} \po{P_i} 
\end{align*}
We let $\leq_{R}$ denote the reflexive and transitive closure of $\po{R}$.
\end{definition}

Note that the pair $(i, j)$, denoted by $i < j$, implies that the action with
key $i$ is a \emph{cause} of the action with key $j$. This accounts for both
the \emph{structural causality} inherent in the process (e.g., a prefix causes
its continuation) and the \emph{causality arising} from message consumption;
specifically, $\ckey{i}\tup{a}\ckey{j}$ indicates that the output action with
key $i$ precedes the input action with key $j$.

The following result establishes that $\leq_{R}$ is indeed a partial order on
the set of keys.

\begin{proposition}

  Let $R$ be a reachable system. Then $(\key(R),\leq_{R})$ is a partial order.
\end{proposition}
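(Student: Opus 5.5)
Reflexivity and transitivity of $\leq_R$ hold by construction, since it is defined as the reflexive and transitive closure of $\po{R}$. The only real content is antisymmetry, i.e.\ that the relation $\po{R}$ restricted to distinct keys is acyclic. We also need every pair in $\po{R}$ to relate keys of $\key(R)$, which is a straightforward induction on the definition of $\po{\cdot}$. We prove acyclicity by induction on the length of a computation from a standard system $R_0$ to $R$, using the invariant
\[
\mathcal{I}(R):\quad \po{R}\setminus\{(i,i)\mid i\in\key(R)\} \text{ has no cycle}.
\]
In the base case $\std(R_0)$ gives $\po{R_0}=\emptyset$.

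For the inductive step we analyse a single transition $R\red{\mu[i]} S$, by induction on its derivation. For a forward step $R\fw{\mu[i]}S$, Lemma~\ref{lem:keys} gives $\key(S)=\key(R)\cup\{i\}$. Freshness of $i$ follows from the side conditions $i\neq j$ in (\textsc{Act}) and $i\notin\key(Q)$ in (\textsc{Par}), together with the standardness premises of the axioms. The key claim is
\[
\po{S}=\po{R}\cup\{(k,i)\mid k\in K\}\cup D,
\]
where $K\subseteq\key(R)$ is the set of keys of the marked prefixes (and, in the (\textsc{Msg}) case, of the emitted message) under which the action fires, and $D\subseteq\{(i,i)\}$. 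Concretely:
\begin{itemize}
\item (\textsc{Out}), (\textsc{In}) and (\textsc{Tau}) add only $(i,i)$, or nothing, locally, because the continuation $P$ is standard.
\item (\textsc{Msg}) replaces $(k,k)$ by $(k,i)$.
\item (\textsc{Act}) adds $(k,i)$ for the enclosing key $k$.
\item (\textsc{Par}), (\textsc{Syn}), (\textsc{Sum}) and (\textsc{Res}) take unions of the subterm contributions.
\end{itemize}
Since $i$ is fresh, every new non-reflexive edge points into the new node $i$, and $i$ has no outgoing edges. Hence no cycle through $i$ can arise, and $\mathcal{I}(S)$ follows from $\mathcal{I}(R)$.

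For a backward step $R\bk{\mu[i]}S$, the same case analysis shows $\po{S}\subseteq\po{R}\cup\{(k,k)\mid k\in\key(S)\}$. This covers the (\textsc{Msg}$^\bullet$) case, which turns $(k,i)$ into $(k,k)$, and the (\textsc{Out}$^\bullet$) case, which removes the edges involving $i$. Removing edges cannot create cycles, so the invariant is preserved.

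The main obstacle is the (\textsc{Const}) case. There, unfolding $A$ to $P$ is harmless because $P$ is standard. The backward rule (\textsc{Const}$^\bullet$), however, folds a term back into $A$, so we must check that its premise forces $\po{\cdot}$ of the result to be $\emptyset$. This holds because $\key(A)=\po{A}=\emptyset$.

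A second delicate point is the global freshness of $i$ through (\textsc{Syn}), where the same key appears on both sides. Here one must verify that the two sides contribute edges only into $i$ (for instance $(k,i)$ from the input side under marked prefixes and from a message $\ckey{k}\tup{a}\ckey{i}$). This again gives only incoming edges to $i$, so no cycle can form.
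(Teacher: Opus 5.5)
Your proof is correct and follows essentially the paper's route: induction on the length of a computation from a standard system, with a case analysis on the rule deriving the last step and the new key added as an extremal element of the existing order. Your global formulation is sharper than the paper's compositional one: each new non-reflexive pair in $\po{S}$ points into the fresh key, which has no outgoing pairs, and backward steps only delete pairs or make them reflexive. The paper instead appeals to disjointness of the keys of parallel components, which is false in general (e.g.\ a consumed message $\ckey{k}\tup{a}\ckey{j}$ shares $j$ with its receiver), and skips the backward rules.

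One step you should make explicit is freshness in the (\textsc{Msg}) case, since that axiom imposes no condition between the consuming key and the message key $k$. It holds because in a reachable system each pending message $\ckey{k}\tup{a}$ stays in parallel with its emitter $\co{a}\ckey{k}.P$, so the (\textsc{Par}) side condition at that node rules out $k$, or the (\textsc{Act}) one does if (\textsc{Syn}) is applied there.
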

\begin{proof}
  Since  $\leq_{R}$ is defined as  the reflexive and transitive closure of $\po{R}$ it remains to show that it is antisymmetric. If $R$ is reachable, then there exists a standard process $S$ s.t. $S \red{\ }^* R$. The proof follows by induction on the length of the derivation.
  
  \textit{Base case}: $R = S$ and $\po{S} = \emptyset$.  The thesis follows trivially.

  \textit{Inductive step}. Then, $S \red{\ }^* R' \red{\ } R$ and
  $(\key(R'),\leq_{R'})$ is a partial order. The proof proceeds by case
  analysis on the transition $R' \red{} R$. Without loss of generality, we
  only consider forward rules, as these are the only ones that introduce new
  keys.
  If the last applied rule is (\textsc{Out}) then $R$ is of the form
  $a\ckey{i}.P\parallel \ckey{i}\tup{a}$ and we have that the new key $i$ is
  the element of the partial order
  $(\{i\},\{(i,i)\})$, 
  as $\ckey{i}\tup{a}$ does not induce
  any order, beside the pair $(i,i)$, and $P$ is standard.
  If the rule is (\textsc{msg}) then the order is clearly a partial order as
  $\po{\ckey{i}\tup{a}\ckey{j}} = \{(i,j)\}$ and the new key $j$ is set as the
  maximal element of the partial order.
  In the case of (\textsc{In}) and (\textsc{Tau}) the new key is the new
  minimal element.
  If the rule is (\textsc{Sum}) it is enough to observe that the key is added
  in one of the summand and the other are standard, hence they do not
  contribute to the partial order.
  If the rule is (\textsc{Par}) the thesis follows as the keys are disjoint.
  If the rule is (\textsc{Syn}) we have the partial orders
  $(\key(P'),\leq_{P'})$ and $(\key(Q'),\leq_{Q'})$, and the new key $i$ is
  the maximal in $(\key(Q'),\leq_{Q'})$ and the minimal in
  $(\key(P'),\leq_{P'})$. As $\key(P')\cap\key(Q') = \{i\}$ the thesis
  follows.
  If the rule was (\textsc{Res}) or (\textsc{Const}) the thesis follows
  easily. Finally assume that the rule is (\textsc{Act}), then the new key $i$ is added
  to $(\key(P'),\leq_{P'}$ which has the key $j$ as minimal element. As 
  $\po{\mu\ckey{i}.P'} = \{(i,j) \mid \,j\in \key(P')\} \cup \po{P'})$ we have the thesis.
\end{proof}

We conclude this section by defining the notion of  \emph{choice context}.

\begin{definition}[Choice Context]\label{def:ctx}
A \emph{choice context} is a process with a hole $\bullet$ generated by the following grammar:
\begin{align*}
  \ctx ::= & \mu\ckey{i}.\ctx \,\mid\, R \parallel \ctx  \,\mid\,  \nu a (\ctx) \,\mid\, \mu_j\ckey{k}.\ctx + \displaystyle{\sum}_{I\setminus\{j\}} \mu_i.R_i\,\mid\, \bullet
\end{align*}
We write $\ctx[R]$ to denote the process obtained by replacing the hole $\bullet$ in $\ctx$ with $R$.
\end{definition}

\section{$\rlang$ Properties}\label{sec:prp}

In this section, we investigate the properties of $\rlang$. We begin with a
fundamental property for any reversible calculus: every action can be undone.
Formally:

\begin{lemma}[Loop Lemma]\label{lm:loop}
Let $R$ and $S$ two  reachable systems. Then $R \fw{\mu[i]} S$ iff $S \bk{\mu[i]} R$. 
\end{lemma}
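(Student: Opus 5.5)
We prove both implications by induction on the derivation of the transition, matching each forward rule of Figure~\ref{fig:semfw} with its backward counterpart in Figure~\ref{fig:sembk}, and conversely.

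\emph{($\Rightarrow$).} We argue by induction on the derivation of $R \fw{\mu[i]} S$, by cases on the last rule applied.
\begin{itemize}
\item The axioms (\textsc{Out}), (\textsc{Msg}), (\textsc{In}) and (\textsc{Tau}) are inverted directly by (\textsc{Out}$^\bullet$), (\textsc{Msg}$^\bullet$), (\textsc{In}$^\bullet$) and (\textsc{Tau}$^\bullet$). The side condition $\std(P)$ is the same in both directions.
\item For (\textsc{Act}), (\textsc{Sum}) and (\textsc{Res}) we apply the induction hypothesis to the premise and rebuild with the corresponding backward rule. For (\textsc{Sum}) the unselected branches are unchanged and standard on both sides. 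For (\textsc{Act}) and (\textsc{Res}) the side conditions ($i\neq j$, $\mu\notin\{a,\co a\}$) depend only on the label and the static context, so they are preserved.
\item For (\textsc{Syn}) we apply the induction hypothesis to both premises and use (\textsc{Syn}$^\bullet$).
\item For (\textsc{Const}) with $A \triangleq P$ and $P\fw{\mu[i]}P'$, the induction hypothesis gives $P'\bk{\mu[i]}P$, and (\textsc{Const}$^\bullet$) yields $P'\bk{\mu[i]}A$.
\item The only delicate case is (\textsc{Par}). From $P\fw{\mu[i]}P'$ with $i\notin\key(Q)$, the induction hypothesis gives $P'\bk{\mu[i]}P$. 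Since $Q$ is untouched, the premise $i\notin\key(Q)$ of (\textsc{Par}$^\bullet$) holds verbatim.
\end{itemize}

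The one forward rule that does not fit the structural pattern is (\textsc{Out}), because its target $\co a\ckey{i}.P\parallel\ckey{i}\tup{a}$ is a parallel composition. We must make sure the backward derivation uses the axiom (\textsc{Out}$^\bullet$) on that whole parallel term, and not (\textsc{Par}$^\bullet$) or (\textsc{Syn}$^\bullet$). Since we are only exhibiting \emph{a} backward derivation, using the axiom is enough.

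\emph{($\Leftarrow$).} We argue symmetrically by induction on the derivation of $S\bk{\mu[i]}R$.
\begin{itemize}
\item The axioms, (\textsc{Act}$^\bullet$), (\textsc{Sum}$^\bullet$), (\textsc{Res}$^\bullet$), (\textsc{Syn}$^\bullet$) and (\textsc{Const}$^\bullet$) are handled exactly as above, reading each rule in reverse. For (\textsc{Const}$^\bullet$), from $P'\bk{\mu[i]}P$ with $A\triangleq P$ the induction hypothesis gives $P\fw{\mu[i]}P'$, hence $A\fw{\mu[i]}P'$ by (\textsc{Const}).
\item The main obstacle is (\textsc{Par}$^\bullet$). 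We have $P\bk{\mu[i]}P'$ with $i\notin\key(Q)$ and need the premise of forward (\textsc{Par}), namely $i\notin\key(Q)$ relative to $P'$. This is again verbatim the same condition on $Q$, so it holds. Here reachability is used only to apply Lemma~\ref{lem:keys}: it guarantees $i\notin\key(P')$, consistent with $i$ being fresh in the rebuilt forward step.
\end{itemize}

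Finally, reachability is preserved along the way. $R$ and $S$ are reachable by hypothesis, and every subterm arising in the induction is a subterm of a reachable term. Since the rules never inspect reachability directly, this suffices, and the two inductions together give the biconditional.
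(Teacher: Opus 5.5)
Your proof is correct and follows the paper's argument. The paper only observes that each forward rule has a unique mirrored backward rule and vice versa, and your rule-by-rule induction on derivations spells exactly that out. One tidy-up: subterms of reachable systems need not be reachable. For example, $\co{a}\ckey{i}.b.\nil$ is the source of the (\textsc{Par}) premise when $\co{a}\ckey{i}.b.\nil \parallel \ckey{i}\tup{a}$ performs $b[j]$, yet it is not itself reachable. So the induction should be stated for arbitrary terms, which works precisely because no rule mentions reachability. For the same reason, your appeal to Lemma~\ref{lem:keys} in the (\textsc{Par}$^\bullet$) case is not needed, and in general it would not even be applicable to such subterms.
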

\begin{proof}
The proof proceeds by case analysis on the transition rule applied.
The analysis is straightforward, as each forward rule has a unique
corresponding backward rule, and vice versa.
\end{proof}

Another fundamental property is \emph{causal-consistent reversibility}~\cite{rccs}.
This property essentially states that the calculus stores the correct
amount of causal information; indeed, a notion of \emph{causal consistency}
is required to assess it. To discuss this property, we borrow several
definitions from~\cite{rccs}.
We use $t, t', s, s'$ to range over transitions. In a transition
$t : R \red{\mu[i]} S$ we call $R$ the \emph{source} of the transition, and
$S$ the \emph{target} of the transition. Two transitions are said to be
\emph{coinitial} if they have the same source, and {cofinal} if they have the
same target.
Given a transition $t$, we denote with $\underline{t}$ its inverse, i.e.,
if $t:R \fw{\mu[i]} S$ (resp. $t:R \bk{\mu[i]} S$) then
$\underline{t}:S \bk{\mu[i]} R$ (resp. $\underline{t}:S \fw{\mu[i]} R$).

We now define a notion of \emph{independence} between coinitial $\rlang$
transitions, based on a causality preorder on keys (inspired by~\cite{LaneseP21}).
Intuitively, independent transitions can be executed in any order
(formally stated in Proposition~\ref{prp:square}) whereas non-independent
transitions represent a \emph{choice}: the execution of one precludes the other.

\begin{definition}[Conflict]\label{def:confl}
Given an $\rlang$ reachable system $R$, two coinitial transitions
$t$ and  $s$ are \emph{conflicting} if one of the following holds:
\begin{enumerate}
\item $t:R\fw{\tau[i]} S_1$, $s:R\fw{\tau[j]} S_2$ and there exists a message $\ckey{z}\tup{a}$ in $R$ such that $\ckey{z}\tup{a}\ckey{i}$ belongs to $S_1$ and $\ckey{z}\tup{a}\ckey{j}$ belongs to $S_2$;

\item $t:R\fw{\mu[i]} S_1$, $s:R\fw{\eta[j]} S_2$, and the two transitions are generated by the same sum operator;
\item $t:R\fw{\mu[i]} S_1$, $s:R\bk{\eta[j]} S_2$ and $j\leq_{S_1} i$ or vice versa.
\end{enumerate}
\end{definition}

Let us comment on the notion of conflicting transitions of Definition
\ref{def:confl}. The first item tells us that two (forward) consumptions 
of the same message (marked by the key $z$) are in conflict.  The second item just says that all
the branches of a choice operator are in conflict with each other. For
example, if we take the process $R =a.\nil + b.\nil$, we have that
$R\fw{a[i]} a\ckey{i}.\nil + b.\nil$ and $R\fw{b[j]} a.\nil + b\ckey{j}.\nil$
and these two transitions are in conflict.
The last item tells us that two transitions are in conflict when a reverse step
eliminates some causes of a forward step. For example, the process
$a\ckey{i}.b.\nil$, can do a forward step with label $b\ckey{j}$, or a
backward step with label $a\ckey{i}$, and we have that $i\leq j$

\begin{definition}[Independence]\label{def:independence}
 Given a $\rlang$ reachable system $R$, two coinitial transitions
$t$ and  $s$ are \emph{independent}, written $t \ind s$, if they are not conflicting.
\end{definition}

\subsection{Causal Consistency}
Intuitively, coinitial and cofinal computations share the same causal information.
Hence, we require that they admit the same reversals. To formalize this,
we first introduce the notions of \emph{path} and \emph{path equivalence}.

We let $\chi, \omega$ range over sequences of transitions, which we call
\emph{paths}. We denote the empty sequence  by
$\epsilon$. 
We denote the number of transitions in a path $\chi$ by $|\chi|$. 
When $\chi$ is forward, i.e. composed by forward only transitions, we denote
by $\underline{\chi}$ the corresponding backward path, where the order of the transitions is reversed.
Moreover, we
let $\chi_1\chi_2$ denote the composition of two paths, $\chi_1$ and $\chi_2$,
provided they are \emph{composable}; that is, when the target of $\chi_1$
coincides with the source of $\chi_2$.

\begin{definition}[Causal Equivalence]
  Let $\causeq$ be the smallest equivalence on paths closed under composition
  and satisfying:
  \begin{description}
  \item[swap:] 
  $ts' \causeq st'$ for every two coinitial independent transitions  
  $t: R \red{\mu[i]} R_1$ and $s: R \red{\eta[j]} R_2$ 
  and every two cofinal transitions
   $s': R_1 \red{\eta[j]} S$, $t': R_2 \red{\mu[i]} S$;
  \item[cancellation:] $t\underline{t} \causeq \epsilon$ and
    $\underline{t}t \causeq \epsilon$
 \end{description}

\end{definition}

Intuitively, two paths are \emph{causally equivalent} if they differ only by
the reordering of independent transitions or the
addition (or removal) of \emph{do--undo} and \emph{undo--redo} transition pairs.

\begin{definition}[Causal Consistency (CC)]
An LTS is causal consistent if for any coinitial
and cofinal paths $\chi$ and $\omega$ we have $\chi \causeq \omega$.
\end{definition}

We can now prove causal consistency, using the theory in \cite{LanesePU24}.
Given an LTS in which loop lemma holds, and with a notion of independence on transitions,
thanks to theory in \cite{LanesePU24} one can derive  causal consistency  from  three basic properties:
\begin{itemize}
\item SP (Square Property),
\item BTI (Backward Transitions are Independent),
\item WF (Well Foundedness).
\end{itemize}

The Square Property tells that two coinitial independent transitions commute, thus
closing a diamond. Formally:

\begin{property}[Square Property - SP]\label{prp:square}
Given a $\rlang$ reachable system $R$ and two coinitial transitions
$t:R\red{\mu[i]} R_1$ and $s:R\red{\eta[j]} R_2$ with $t \ind s$, then there exist
two cofinal transitions $t':R_1 \red{\eta[j]} S$ and $s':R_2 \red{\mu[i]} S$ for some reachable system $S$.
\end{property}
\begin{proof}
 Assume $t:R\red{\mu[i]} R_1$ and $s:R\red{\eta[j]} R_2$ and $t \ind s$. We proceed by cases on
 the kind of the applied transitions: either forward or backward.
 \begin{enumerate}
  \item $t:R\red{\mu[i]} R_1$ and $s:R\red{\eta[j]} R_2$ are both forward. As $t \ind s$
  we know that the two transitions do not imply consuming the same message
  nor they are different branches of the same sum. We proceed by case analysis on the forward rules. We just show the most significative cases.

  If the two transition are an emission of a message, w.l.o.g, we can assume that
  $R = (\nu \tilde{a})(R_l \parallel R_r)$ with $\tilde{a} $ a set of restricted names, and
  $R_l\fw{\mu[i]} R'_l$ and $R_r\fw{\eta[j]} R'_r$. It is easy to see that by applying twice
  the rule (\textsc{Par}) we have:
  \begin{align*}
&  \nu \tilde{a}(R_l \parallel R_r)\fw{\mu[i]}\nu \tilde{a}(R'_l \parallel R_r)\fw{\eta[j]}\nu \tilde{a}(R'_l \parallel R'_r) \quad \text{ and }\\
  &  \nu \tilde{a}(R_l \parallel R_r)\fw{\eta[j]}\nu \tilde{a}(R_l \parallel R'_r)\fw{\mu[i]}\nu \tilde{a}(R'_l \parallel R'_r)
  \end{align*}

  If the two reductions consume  different messages we have  to distinguish two cases:
  either the two communications happen in two separate parts of the system, or not. In the first case
  we can proceed as in the previous case by assuming $R =  \nu \tilde{a}(R_l \parallel R_r)$ with the two parallel systems not interacting each other. In the second case instead, we have that the two systems interact, that is the message consumed by $R_r$ is present in $R_l$ and vice versa. In this case
  we have that $R = \nu \tilde{a}(\mathtt{C}_1[R_l] \parallel \mathtt{C}_2[R_r])$ with
  $\mathtt{C}_1[\bullet]$ and $\mathtt{C}_2[\bullet]$ being two choice contexts (see Definition \ref{def:ctx}). Also in this case we have that both
  $R_l$  and $R_r$ can do two different and independent forward transitions. We can then conclude by making the cases analysis on smaller terms.

  \item $t:R\red{\mu[i]} R_1$ is forward and $s:R\red{\eta[j]} R_2$ is backward.
  By case analysis on the applied rules. The reasoning is similar to the forward case.

  \item $t:R\red{\mu[i]} R_1$ is backward and $s:R\red{\eta[j]} R_2$ is forward.
    By case analysis on the applied rules. The reasoning is similar to the previous case.

  \item $t:R\red{\mu[i]} R_1$ and $s:R\red{\eta[j]} R_2$ are both backward.  By case analysis on the applied rules. The reasoning is similar to the previous case.
 \end{enumerate}
\end{proof}

BTI generalises the concept of backward determinism used for reversible
sequential languages. It specifies that two backward transitions from a same
configuration are always independent.

\begin{property}\label{prp:BTI}[Backward Transitions are Independent - BTI]
  Given a reachable $\rlang$ system $R$, any two distinct coinitial backward
  transitions $t : R \bk{\mu[i]} S_1$ and $t : R \bk{\eta[j]} S_2$ are
  independent.
\end{property}

The \emph{BTI property} holds trivially: according to the definitions of
conflicting and independent transitions (Definitions~\ref{def:confl} and~\ref{def:independence}),
there are no cases in which two backward transitions are in conflict.
Consequently, any pair of backward transitions is always independent.

We now show that reachable configurations have a finite past.

\begin{proposition}[Well-Foundedness - WF]\label{prp:wf}
  Let $R_k$ be a reachable $\rlang$ system. There is no infinite sequence
  starting from $R_k$ such that $R_{i} \bk{\mu_i[j_i]} R_{i-1}$ for all
  $i \ge k$.
\end{proposition}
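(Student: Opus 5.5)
The plan is to use the number of keys as a strictly decreasing measure. By Lemma~\ref{lem:keys}, whenever $R \bk{\mu[i]} S$ with $R$ reachable, we have $\key(S) = \key(R) \setminus \{i\}$. The removal is strict because a backward step with key $i$ can only be derived if $i$ actually occurs in $R$.

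The first step is to check that $i \in \key(R)$. I would do this by induction on the derivation of the backward transition. The axioms $(\textsc{Out}^\bullet)$, $(\textsc{Msg}^\bullet)$, $(\textsc{In}^\bullet)$ and $(\textsc{Tau}^\bullet)$ all consume a prefix or message marked $\ckey{i}$ that is present in the source. The contextual rules $(\textsc{Act}^\bullet)$, $(\textsc{Sum}^\bullet)$, $(\textsc{Par}^\bullet)$, $(\textsc{Syn}^\bullet)$ and $(\textsc{Res}^\bullet)$ preserve membership by the definition of $\key(\cdot)$. For $(\textsc{Const}^\bullet)$ the key lies in the source $P'$ by the induction hypothesis applied to its premise. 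It follows that $|\key(S)| = |\key(R)| - 1$.

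The second step is to note that every reachable system has a finite key set. Standard systems have none, and by Lemma~\ref{lem:keys} each step adds or removes at most one key. Hence along the derivation witnessing reachability, $|\key(\cdot)|$ is finite at every point.

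The final step is the contradiction. Suppose there were an infinite backward sequence from $R_k$. Every system along it is reachable, since it is reached from $R_k$, which is itself reachable. So $|\key(\cdot)|$ would decrease by exactly one at each step, giving an infinite strictly decreasing sequence of natural numbers starting at $|\key(R_k)|$. This is impossible. Indeed, the argument shows that every backward sequence from $R_k$ has length at most $|\key(R_k)|$.

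The main obstacle I expect is the strictness, that is, $i \in \key(R)$ together with the correctness of Lemma~\ref{lem:keys} in the backward direction. The delicate case is $(\textsc{Const}^\bullet)$, which turns a process back into a constant $A$ with $\key(A)=\emptyset$. Here I would argue that the premise $P' \bk{\mu[i]} P$ with $P$ the body of $A$ forces $P$ to be standard, because it is reachable by going forward from $A$. So the target still has exactly the keys of $P'$ minus $i$, namely none.
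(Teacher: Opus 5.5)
Your proposal is correct and follows the paper's own argument. By Lemma~\ref{lem:keys}, every backward step deletes a key from a key set that is finite for reachable systems, so no infinite backward chain exists; your induction showing $i\in\key(R)$ just makes explicit the paper's remark that each backward transition removes a key.

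The $(\textsc{Const}^\bullet)$ case you flag is harmless for strictness, since $\key(A)=\emptyset$ while $i\in\key(P')$. The step where reachability is genuinely needed, hidden inside Lemma~\ref{lem:keys}, is instead $(\textsc{Msg}^\bullet)$: it removes a key only because reachable consumed messages $\ckey{i}\tup{a}\ckey{j}$ satisfy $i\neq j$.
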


\begin{proof} Well-foundedness (WF) follows from the fact that each backward transition removes
a key. 
Thanks to Lemma~\ref{lem:keys}
we have that number of keys $|\key(R)|$ in a reachable system $R$ is finite,
only a finite number of backward steps can be performed.
\end{proof}

The \emph{Parabolic Lemma}~\cite[Lemma 11]{rccs} states that any path is causally
equivalent to a  path consisting of a sequence of backward steps
followed by a sequence of forward steps. In other words, up to causal equivalence,
paths can be rearranged to first reach the maximum freedom of choice by moving
backwards, before proceeding forward. 
Here we state the following.

\begin{property}[Parabolic Property - PP] An LTS satisfies the Parabolic Property
  iff for any path $\chi$, there exist two forward-only paths $\omega,\omega'$
  such that $\chi \causeq \underline{\omega}\omega'$ and
  $|\omega| + |\omega'| \leq |\chi|$. 
\end{property}
Observe that if the LTS of  reversing calculi has the Parabolic Property then the parabolic lemma holds. 

We can now prove that the parabolic lemma holds, thanks to the proof schema of
\cite{LanesePU24} recalled in the following propositions.

\begin{proposition}[cf. Proposition 3.4 \cite{LanesePU24}] Suppose BTI and SP
  hold, then PP holds.
\end{proposition}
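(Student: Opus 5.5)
The plan follows the proof schema of \cite{LanesePU24}: the idea is to push every backward step of a path to the left of all forward steps by repeatedly eliminating \emph{forward--backward} pairs, that is, adjacent subpaths of the form $t\,s$ with $t$ forward and $s$ backward. We argue by lexicographic induction on the pair $(|\chi|, d(\chi))$. Here $d(\chi)$ counts the pairs of positions (forward step, later backward step) in $\chi$. If $d(\chi)=0$, then $\chi$ already has the form $\underline{\omega}\omega'$ with $\omega,\omega'$ forward-only, and $|\omega|+|\omega'|=|\chi|$, so the claim holds trivially.

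Otherwise $\chi$ contains an adjacent pair $t:R\fw{\mu[i]} S$ followed by $s:S\bk{\eta[j]} T$. By the Loop Lemma (Lemma~\ref{lm:loop}), $\underline{t}:S\bk{\mu[i]} R$ is a backward transition coinitial with $s$. We distinguish two cases.

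\emph{Case 1: $\underline{t}$ and $s$ coincide.} Then $t\,s = t\,\underline{t}\causeq\epsilon$ by cancellation. Since $\causeq$ is closed under composition, $\chi$ is equivalent to a strictly shorter path, and the induction hypothesis applies; the length bound is inherited.

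\emph{Case 2: $\underline{t}$ and $s$ are distinct.} By BTI (Property~\ref{prp:BTI}) they are independent. SP (Property~\ref{prp:square}) then gives cofinal transitions $s':R\bk{\eta[j]} U$ and $t'':T\bk{\mu[i]} U$; each is backward because it carries the same label and key as its counterpart, and backward steps remove exactly that key (Lemma~\ref{lem:keys}). Applying the Loop Lemma to $t''$ yields the forward step $t':U\fw{\mu[i]} T$. It remains to derive $t\,s\causeq s'\,t'$. The swap axiom applied to the square $\underline{t}\,s'' \causeq s\,t''$, where $s''=s'$ read from $R$, together with cancellation, lets us rewrite
\[ t\,s \causeq t\,s\,t''\,\underline{t''} \causeq t\,\underline{t}\,s'\,\underline{t''} \causeq s'\,t'. \]
The replacement keeps the length unchanged and strictly decreases $d$, since one forward step now follows a backward step instead of preceding it, and no other ordered pair is affected. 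The induction hypothesis therefore concludes this case as well.

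The main obstacle is the bookkeeping in Case 2. One must check that the transition produced by SP is the very key-labelled step that reverses to $t'$, so that the swap and cancellation instances compose correctly with matching sources and targets. One must also confirm that $d$ genuinely decreases. Reachability of the intermediate systems is not an issue, since SP guarantees that the resulting system is reachable.
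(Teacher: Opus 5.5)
Your scheme is the standard argument behind Proposition~3.4 of \cite{LanesePU24}, which the paper only cites. The measure $(|\chi|,d(\chi))$, the cancellation case, and the rewriting $t\,s \causeq t\,s\,t''\,\underline{t''} \causeq t\,\underline{t}\,s'\,\underline{t''} \causeq s'\,t'$ are all fine, and $d$ does drop by exactly one. The step that fails is the claim that the transitions $s'$ and $t''$ produced by SP are backward.

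In this paper SP is stated with $\red{\ }$ over labels in $\mathtt{Act}\times\mathcal{K}$, so it says nothing about direction. Lemma~\ref{lem:keys} and the Loop Lemma give two facts: forward steps use fresh keys, and backward steps remove a key that is present. These settle the direction only when $i\neq j$. In that case $j\in\key(R)$ and $i\in\key(T)$, so neither $s'$ nor $t''$ can be forward. When $i=j$, the same facts force $s'$ and $t''$ to be \emph{forward}.

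Such pairs do occur in $\rlang$. Take $A\triangleq a.\nil$: the system $a\ckey{i}.\nil$ reverts to $a.\nil$ by (\textsc{In}$^\bullet$) and also to $A$ by (\textsc{Const}$^\bullet$). Let $t: A\fw{a[i]}a\ckey{i}.\nil$ and $s: a\ckey{i}.\nil\bk{a[i]}a.\nil$. The square can only be closed by $A\fw{a[i]}a\ckey{i}.\nil$ and $a.\nil\fw{a[i]}a\ckey{i}.\nil$. Your rewriting then gives back $t\,s$, and $d$ does not decrease. In fact PP fails for this path: $A$ has no backward step, and every nonempty forward path ends in a system with keys, so no $\underline{\omega}\omega'$ leads from $A$ to $a.\nil$.

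So what is missing is not more careful key bookkeeping but a hypothesis: SP must preserve the direction of transitions. In \cite{LanesePU24} this is built in, because backward transitions carry labels distinct from forward ones and SP preserves labels. With SP read that way, $s'$ and $t''$ are backward by assumption, your Case~2 goes through verbatim, and the appeal to Lemma~\ref{lem:keys} is unnecessary.
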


\begin{proposition}[cf. Proposition 3.8 \cite{LanesePU24}] Suppose WF and PP
  hold, then CC holds.
\end{proposition}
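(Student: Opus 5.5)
The plan is to reproduce the argument of Proposition~3.8 in \cite{LanesePU24}. Fix coinitial and cofinal paths $\chi$ and $\omega$ with common source $R$ and common target $S$. By PP, $\chi \causeq \underline{\chi_1}\chi_2$ and $\omega \causeq \underline{\omega_1}\omega_2$, where $\chi_1,\chi_2,\omega_1,\omega_2$ are forward-only. Hence it suffices to prove the statement for \emph{parabolic} paths. I will do this by well-founded induction on the pair $(|\chi|+|\omega|, \text{distance to a standard ancestor})$. WF (Proposition~\ref{prp:wf}) guarantees that this ordering has no infinite descending chains, since backward steps from any reachable system eventually stop.

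The core step handles two parabolic paths $\underline{\chi_1}\chi_2$ and $\underline{\omega_1}\omega_2$ between $R$ and $S$. First I treat the purely forward case, where $\chi_1=\omega_1=\epsilon$. Two forward paths from $R$ to $S$ must add the same set of keys, namely $\key(S)\setminus\key(R)$, by Lemma~\ref{lem:keys}. Let $t$ be the first transition of $\chi_2$ and let $k$ be its key. The same key $k$ occurs in some transition $t'$ of $\omega_2$. I then use SP (Property~\ref{prp:square}) to swap $t'$ backwards to the front of $\omega_2$. At each swap I argue independence: the transitions preceding $t'$ in $\omega_2$ do not cause $k$ in the sense of $\leq$, because in $\chi_2$ the key $k$ is introduced first. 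They also do not conflict with it in the sense of items 1 and 2 of Definition~\ref{def:confl}, since both keys end up in $S$. Two conflicting transitions could not both have their effects present in the same target, because a message cannot be consumed twice and only one branch of a sum can be marked. Once $t$ and $t'$ are both at the front, they are coinitial transitions with the same key and the same resulting effect, so they coincide. I then apply the induction hypothesis to the shorter suffixes.

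For the general case, with nonempty backward parts, I use the maximal-freedom argument. If the first backward step $\underline{t}$ of $\underline{\chi_1}$ removes a key $k$ that is also removed in $\underline{\omega_1}$, I move it to the front of $\underline{\omega_1}$ using BTI (Property~\ref{prp:BTI}) together with SP, and then recurse. If instead $k$ is never removed in $\omega$, then $k\in\key(S)$, so $\chi_2$ must re-add $k$ through some forward transition $s$. I swap $s$ backwards in $\chi_2$ until it meets $\underline{t}$, and then apply cancellation. This step is legitimate because the transitions of $\chi_2$ before $s$ do not depend on $k$: if one did, $k$ would have to be present when it ran, which it is not.

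I expect the main obstacle to be these swapping steps, in particular justifying that each required pair of transitions is actually independent under Definition~\ref{def:confl}. That justification relies on invariants of reachable systems: key uniqueness, and a characterisation of when a key can be re-added. Applying SP also requires that the cofinal transitions carry the same labels, which calls for a key-tracking lemma. If these direct arguments become too intricate, I would simply invoke the corollary of \cite{LanesePU24} that CC follows from the Loop Lemma (Lemma~\ref{lm:loop}), SP, BTI and WF, given that PP has already been established.
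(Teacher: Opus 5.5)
Your proposal does not prove the stated implication. The paper gives no proof of its own here: it imports an abstract fact from \cite{LanesePU24}, namely that WF and PP, together with the standing Loop Lemma, yield CC. Your core argument instead relies on hypotheses outside the statement. It uses SP (Property~\ref{prp:square}), BTI (Property~\ref{prp:BTI}) and Lemma~\ref{lem:keys}. It also uses several invariants of reachable $\rlang$ systems that are neither assumed nor proved anywhere:
\emph{(i)} uniqueness of keys;
\emph{(ii)} every $\leq_S$-cause of a key is introduced before it in every forward path to $S$;
\emph{(iii)} two coinitial transitions with the same key coincide;
\emph{(iv)} re-adding a removed key yields exactly the inverse of the undone step.
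Your fallback has the same problem: it assumes SP and BTI, so it reproves Theorem~\ref{th:cc}, not this proposition. There are also local defects. SP only applies to coinitial transitions, so moving $t'$ in front of a preceding step $s$ requires applying SP to $\underline{s}$ and $t'$ at the intermediate state. Their independence is governed by item~3 of Definition~\ref{def:confl}, not by items~1 and~2 as you claim. Finally, WF does no real work in your induction measure, since lexicographic pairs of naturals are well founded anyway. That is a sign the intended argument is elsewhere.

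The missing idea is to apply PP once, to the loop $\chi\underline{\omega}$, rather than to $\chi$ and $\omega$ separately. Here $\underline{\omega}$ reverses $\omega$ and inverts each step via the Loop Lemma. Applying PP separately only leaves you comparing two parabolic paths, which is as hard as the original problem. Then induct on the source state. Write $R'\prec R$ if there is a nonempty backward path from $R$ to $R'$. By WF, $\prec$ has no infinite descending chains, so you can prove by well-founded induction on $R$ that any coinitial and cofinal $\chi,\omega$ with source $R$ satisfy $\chi\causeq\omega$. PP gives forward-only $u,u'$ with $\chi\underline{\omega}\causeq\underline{u}u'$. Since $\chi\underline{\omega}$ starts and ends at $R$, both $u$ and $u'$ are forward paths from some $R'$ to $R$.
\emph{Case 1: one of $u,u'$ is empty.} Then $R'=R$ and the other is a forward cycle at $R$. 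If that cycle were nonempty, iterating its inverse would contradict WF. So both are empty and $\chi\underline{\omega}\causeq\epsilon$.
\emph{Case 2: both are nonempty.} Then $R'\prec R$, and the induction hypothesis gives $u\causeq u'$. Hence $\chi\underline{\omega}\causeq\underline{u}u\causeq\epsilon$ by cancellation.
In both cases $\chi\causeq\chi\underline{\omega}\omega\causeq\omega$ by cancellation and closure under composition. Only PP, WF, the Loop Lemma and the two generating clauses of $\causeq$ are used; no SP, BTI or key bookkeeping is needed.
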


Summarizing our achievements we have:
\begin{theorem}\label{th:PL}
  The LTS of Definition~\ref{de:ltlrccsa} satisfies PP.
\end{theorem}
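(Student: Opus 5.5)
The plan is to obtain Theorem~\ref{th:PL} as a direct instance of the axiomatic schema of \cite{LanesePU24}, recalled just above as Proposition 3.4: PP holds as soon as BTI and SP hold. The structure is therefore a two-step reduction. First, I will check that the hypotheses of the schema are met, namely that the LTS of Definition~\ref{de:ltlrccsa} is a reversible LTS whose backward relation is the exact inverse of the forward one. This is precisely the Loop Lemma (Lemma~\ref{lm:loop}), restricted to reachable systems, which is the only class of systems we consider. I will also check that independence $\ind$ (Definition~\ref{def:independence}) is a symmetric relation on coinitial transitions, as the framework requires. Symmetry is immediate, since each clause of Definition~\ref{def:confl} is symmetric: clause 3 includes ``or vice versa'', and clauses 1 and 2 are symmetric in $t$ and $s$.

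Second, I will invoke Property~\ref{prp:square} (SP) and Property~\ref{prp:BTI} (BTI) for reachable systems and apply Proposition 3.4 of \cite{LanesePU24}. It is worth sketching why the schema yields the claim, so that the length bound $|\omega|+|\omega'|\le|\chi|$ is justified. The argument is by lexicographic induction on the length of $\chi$ and on the position of the earliest ``forward step immediately followed by a backward step''.

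If $\chi$ is already of the form backward-then-forward, we are done. Otherwise $\chi$ contains a factor $t\,\underline{s}$, with $t$ forward and $\underline{s}$ backward. By the Loop Lemma, $\underline{t}$ and $\underline{s}$ are two coinitial backward transitions. There are then two cases.
\begin{itemize}
\item If they are the same transition, cancellation removes the pair and strictly shortens the path.
\item Otherwise BTI makes $\underline{t}$ and $\underline{s}$ independent, and SP closes the square. This lets us swap the factor into a backward step followed by a forward step, with the same length.
\end{itemize}
Repeating this terminates, because each swap moves a backward step strictly earlier and no step ever increases the length. The result is $\chi\causeq\underline{\omega}\omega'$ with the required bound.

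I expect the main obstacle to be the swap step. Here one must ensure that the transitions produced by SP carry the same labels $\mu[i],\eta[j]$, so that the rewritten path matches the \textbf{swap} clause of $\causeq$, and that SP is applied to the correct pair of coinitial transitions. Because the swap acts on the pair $\underline{t},\underline{s}$ obtained via the Loop Lemma, the independence we actually need is exactly between backward transitions, and BTI supplies it. I would therefore state that step carefully and rely on Property~\ref{prp:square} as given. Finally, I will note that all intermediate systems remain reachable, since they are reached by $\red{\ }$ steps from a reachable system, so both properties apply at every stage.
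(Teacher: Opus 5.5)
Your proposal is correct and takes the paper's route. The paper obtains PP in one line from BTI (Property~\ref{prp:BTI}) and SP (Property~\ref{prp:square}) via Proposition~3.4 of \cite{LanesePU24}, and your unfolding of that proposition is sound: cancel when $\underline{t}$ and $\underline{s}$ coincide, otherwise use BTI, SP and cancellation to turn each forward--backward peak into a backward--forward pair.

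One small correction to the termination measure: the position of the earliest peak need not decrease after a swap (e.g.\ the pattern backward, forward, backward, forward, backward). The measure that works is the length together with the number of forward steps occurring before backward steps, which is what your remark that each swap moves a backward step strictly earlier captures.
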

\begin{proof}
 This is a consequence of properties \ref{prp:BTI} and \ref{prp:square}.
\end{proof}

\begin{theorem}\label{th:cc}
  The LTS of Definition~\ref{de:ltlrccsa} satisfies CC.
\end{theorem}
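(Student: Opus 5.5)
The plan is to obtain causal consistency as a direct instance of the axiomatic schema of \cite{LanesePU24}, instead of working with coinitial and cofinal paths by hand. That schema needs three ingredients for the LTS of Definition~\ref{de:ltlrccsa}: the Loop Lemma, an independence relation on coinitial transitions, and the two properties WF and PP. Everything reduces to these, so the proof will be a chain of previously established results.

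First I check that the standing hypotheses of \cite{LanesePU24} hold. The LTS is pre-reversible: by Lemma~\ref{lm:loop} every forward transition $R \fw{\mu[i]} S$ has an inverse $S \bk{\mu[i]} R$ and vice versa, so $\underline{t}$ is always defined and the cancellation clause of $\causeq$ makes sense. Independence is the relation $\ind$ of Definition~\ref{def:independence}, which is symmetric because the three conflict clauses of Definition~\ref{def:confl} are symmetric (clause 3 explicitly includes ``vice versa''). The swap clause of $\causeq$ coincides with the one used in \cite{LanesePU24}, so our notion of causal equivalence matches theirs.

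Next I assemble the two ingredients. PP is Theorem~\ref{th:PL}, which rests on Property~\ref{prp:BTI} and Property~\ref{prp:square} via Proposition~3.4 of \cite{LanesePU24}. WF is Proposition~\ref{prp:wf}: by Lemma~\ref{lem:keys} each backward step removes exactly one key from the finite set $\key(R)$, so backward chains have length at most $|\key(R)|$. Then I apply Proposition~3.8 of \cite{LanesePU24}: WF together with PP gives CC. Hence any two coinitial and cofinal paths $\chi,\omega$ satisfy $\chi \causeq \omega$.

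I expect the main obstacle to be faithfulness rather than the final composition. The abstract results assume that SP holds \emph{for all} independent coinitial pairs and that the closing transitions carry the same labels and keys. So I would re-examine Property~\ref{prp:square}, in particular the mixed forward/backward cases. There, conflict clause 3 uses $\leq_{S_1}$, and I must check that when $j \not\leq_{S_1} i$ the backward step on $j$ remains enabled after the forward step on $i$. Concretely, I must check that no (\textsc{Par}$^\bullet$) side condition on $\key(Q)$ or $\std(P)$ premise becomes false. I would argue this by showing that $j$ being maximal in $\leq_R$ and not below $i$ keeps $j$ maximal in $\leq_{S_1}$, together with Lemma~\ref{lem:keys}. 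Once that is secured, the theorem follows immediately.
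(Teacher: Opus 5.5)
Your proposal is correct and takes the same route as the paper. The paper's proof combines Theorem~\ref{th:PL} (PP, obtained from BTI and SP) with Proposition~\ref{prp:wf} (WF) and invokes Proposition~3.8 of \cite{LanesePU24}; the re-check of the mixed forward/backward cases of SP that you anticipate belongs to Property~\ref{prp:square}, which the paper treats as already established at this point, so it is not needed for this step.
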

\begin{proof}
 This is a consequence of theorem~\ref{th:PL} and property \ref{prp:wf}.
\end{proof}

\added{Theorem \ref{th:cc} shows that causal equivalence characterizes a space for admissible rollbacks
that are (i) correct as they do not lead to states not reachable by some forward computation
and (ii) flexible enough to allow undo operations to be rearranged with respect to the order
in which the undone concurrent transitions were originally performed. This implies that
the states reached by any backward computation could be reached by performing forward
computations only (see the following Corollary). Therefore, we can conclude that $\rlang$ meets causal reversibility.}


\begin{corollary}
A system $S$ is reachable iff there exists a standard system $R$ and a forward only path $R \fw{}^*S$.
\end{corollary}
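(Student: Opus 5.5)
The right-to-left direction is immediate. A forward-only path from a standard $R$ is in particular a derivation using the rules of Figures~\ref{fig:semfw} and~\ref{fig:sembk}, so $S$ is reachable by definition.

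For the left-to-right direction, suppose $S$ is reachable, so there is a standard $R$ and a path $\chi$ from $R$ to $S$ that may mix forward and backward steps. I would apply the Parabolic Property (Theorem~\ref{th:PL}) to $\chi$. It gives forward-only paths $\omega,\omega'$ with $\chi \causeq \underline{\omega}\omega'$. Causal equivalence preserves source and target: it is generated by swap and cancellation, and both rewrite a path into a coinitial and cofinal one, and composition closure keeps this invariant. Hence $\underline{\omega}\omega'$ also starts at $R$ and ends at $S$.

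The key step is to show that $\omega$ is empty. The backward path $\underline{\omega}$ starts at $R$, and by Lemma~\ref{lem:keys} every backward transition $R\bk{\mu[i]}R'$ removes the key $i$, so $i\in\key(R)$. Concretely, I would check by induction on the backward rules of Figure~\ref{fig:sembk} that the source of any backward transition contains the key in its label. Each axiom ($\textsc{Out}^\bullet$, $\textsc{Msg}^\bullet$, $\textsc{In}^\bullet$, $\textsc{Tau}^\bullet$) has a marked prefix or message with that key on its left-hand side. The contextual rules preserve the property: $\textsc{Const}^\bullet$ has source $P'$, which is itself the source of the premise. Since $\std(R)$ means $\key(R)=\emptyset$, no backward transition leaves $R$, so $\omega=\epsilon$ and $\chi\causeq\omega'$. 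Therefore $\omega'$ is a forward-only path $R\fw{}^*S$.

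I expect the main obstacle to be the claim that $\causeq$ relates only coinitial and cofinal paths. This is implicit in how the Parabolic Property is used in~\cite{LanesePU24}, and I would justify it briefly by induction on the derivation of $\causeq$, checking that both swap and cancellation satisfy it and that it is preserved under composition, reflexivity, symmetry and transitivity. Everything else reduces to the key-counting observation above.
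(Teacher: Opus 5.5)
Your proof is correct and follows the paper's own argument. You apply the Parabolic Property to the path from the standard system and observe that the backward part $\underline{\omega}$ must be empty, because a standard system has no keys and so admits no backward transitions. You also spell out two points the paper leaves implicit: that $\causeq$ relates only coinitial and cofinal paths (which requires reading the empty path in the cancellation rule as the empty path at the source of $t$), and the rule induction showing that the source of any backward step contains the key of its label.
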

\begin{proof} From PP, by noticing that the backward path is empty since $R$ cannot take backward
actions.
\end{proof}

\begin{example}
Let us consider the system   $R = (a.\nil + b.\nil) \parallel \co{a}.\co{b}.\nil $, of Example \ref{ex:intro} and the following two paths:
\begin{align*}
&\chi = R\fw{\tau[i]}
 \fw{\tau[j]}
 \fw{\tau[z]}  (a.\nil + b\ckey{z}.\nil) \parallel \co{a}\ckey{i}.(\co{b}\ckey{j}.\nil \parallel \ckey{j}\tup{b}\ckey{z})\parallel \ckey{i}\tup{a} \\
 &\omega = R\fw{\tau[i]}
 \fw{\tau[j]}
 \fw{\tau[w]}  (a\ckey{w}.\nil + b.\nil) \parallel \co{a}\ckey{i}.(\co{b}\ckey{j}.\nil \parallel \ckey{j}\tup{b})\parallel\ckey{i}\tup{a}\ckey{w} \\
 &\qquad  \bk{\tau[w]}\fw{\tau[z]}   (a.\nil + b\ckey{z}.\nil) \parallel \co{a}\ckey{i}.(\co{b}\ckey{j}.\nil \parallel \ckey{j}\tup{b}\ckey{z})\parallel \ckey{i}\tup{a}
\end{align*}
Since $\chi$ and $\omega$ are coinitial, they start in $R$, and cofinal, they end up in the same system,  thanks to Theorem~\ref{th:cc} we have that they are causally equivalent, that is $\chi \causeq \omega$.
\end{example}

\section{$\rlang$ at work: a leader election example}\label{sec:ex}
We illustrate the behaviour of $\rlang$ with a simple race protocol modelling
a leader election scenario, by adapting the example of \cite{broad}
\footnote{\added{In \cite{broad} the modelling is simpler, as the considered reversible calculus uses broadcast, e.g n-ary syncrhonissation, as communication facility.}}.
 Several
processes compete to react to a message announcing an election. The process
that reacts first proposes itself as the leader (by emitting $n-1$ messages
$\co{p}_i$), while the others processes may choose to follow its lead (by
replying $\co{ok}_i$) or to refuse it (by replying $\co{no}_i$). This
 may naturally lead to a deadlock, but thanks to the reversible semantics the
system can get back from a global deadlock and try other solutions.
Also, when a process proposes itself as a leader (via the prefix $\co{p}_i$),
it must collect $n-1$ confirmative replies before considering itself
the new leader.
Let $\mathtt{seq}(a,n) = a^1.a^2. \dots .a^n.\nil$ denote the sequential
composition of $n$ instances of the prefix $a$.

\begin{eqnarray} \label{lbl:leader} R_i \triangleq \big( l.(
  \displaystyle{\prod}_{j\in I \setminus\{i\}} \co{p_i}.ok_j.\co{a}_i)
  \parallel \texttt{seq}(a_i, | I | -1 )\big) + \displaystyle{\sum}_{j\in I
    \setminus\{i\}} p_j.(\co{ok}_i + \co{no}_i)
\end{eqnarray}
As an example, let us suppose a system made of 3 processes, that is:
$$\mathtt{S}= R_1 \parallel R_2 \parallel R_3 \parallel \co{l}.\nil$$
where $\co{l}$ acts as the token emitted to initiate an election. Each $R_i$
can become a leader by consuming the message $\tup{l}$ (left branch of the
main choice of (\ref{lbl:leader})) or react to someone else's proposal
(right-branch of (\ref{lbl:leader})) ). We now describe an execution in which
$R_1$ proposes itself as leader, $R_2$ agrees to it and $R_3$ refuses. This
leads the entire system to a bad state $\mathtt{S}_{e}$ where: $R_1$ has not
enough confirmation to become leader, no other process can take over and the
entire system is deadlocked.
In what follows we will use 
use natural numbers as keys, just to ease the reading of the example.
\begin{align*}
\mathtt{S}_{e}=\; & l\ckey{2}.\big(( \co{p}_1\ckey{3}.ok_2.\co{a_1} \parallel p_1\ckey{4}.ok_3.\co{a_1} \parallel \ckey{3}\tup{p_1}\ckey{5} \parallel \ckey{4}\tup{p_1}\ckey{6}) \parallel \mathtt{seq}(a_1,2)\big) +
	\displaystyle{\sum}_{I \setminus\{1\}} p_j.(\co{ok}_i + \co{no}_i)\\
	&\parallel \big( l.( \displaystyle{\prod}_{j\in I \setminus\{2\}} \co{p_i}.ok_j.\co{a}_i) \parallel \texttt{seq}(a_2, 2 )\big) +
	p_1\ckey{3}.(\co{ok}_2\ckey{8} + \co{no}_2 \parallel \ckey{8}\tup{ok_2})
	+
	p_3.(\co{ok}_2 + \co{no}_2)
	\\
	&\parallel  \big( l.( \displaystyle{\prod}_{j\in I \setminus\{3\}} \co{p_i}.ok_j.\co{a}_i) \parallel \texttt{seq}(a_3,2 )\big) +
	p_1\ckey{4}.(\co{ok}_3 + \co{no}_3\ckey{7} \parallel \ckey{7}\tup{no_3})
	+
	p_2.(\co{ok}_3 + \co{no}_3)
	\\
	&\parallel \co{l}\ckey{1}.\nil \parallel \ckey{1}\tup{l}\ckey{2}
\end{align*}
Thanks to the reversible semantics, process $R_3$ can revert its decision and agree with $R_1$ and let the system reach a valid state as follows:
$$ \mathtt{S}_{e} \bk{\tau\ckey{7}}\fw{\tau\ckey{9}}\mathtt{S}_{ok}$$
\begin{align*}
\mathtt{S}_{ok}=\; & l\ckey{2}.\big(( \co{p}_1\ckey{3}.ok_2.\co{a_1} \parallel p_1\ckey{4}.ok_3.\co{a_1} \parallel \ckey{3}\tup{p_1}\ckey{5} \parallel \ckey{4}\tup{p_1}\ckey{6}) \parallel \mathtt{seq}(a_1,2)\big) +
	\displaystyle{\sum}_{I \setminus\{1\}} p_j.(\co{ok}_i + \co{no}_i)\\
	&\parallel \big( l.( \displaystyle{\prod}_{j\in I \setminus\{2\}} \co{p_i}.ok_j.\co{a}_i) \parallel \texttt{seq}(a_2, 2 )\big) +
	p_1\ckey{3}.(\co{ok}_2\ckey{8} + \co{no}_2 \parallel \ckey{8}\tup{ok_2})
	+
	p_3.(\co{ok}_2 + \co{no}_2)
	\\
	&\parallel  \big( l.( \displaystyle{\prod}_{j\in I \setminus\{3\}} \co{p_j}.ok_j.\co{a}_i) \parallel \texttt{seq}(a_3,2 )\big) +
	p_1\ckey{4}.(\co{ok}_3\ckey{9} + \co{no}_3 \parallel \ckey{9}\tup{ok_3})
	+
	p_2.(\co{ok}_3 + \co{no}_3)
	\\
	&\parallel \co{l}\ckey{1}.\nil \parallel \ckey{1}\tup{l}\ckey{2}
\end{align*}
where the two messages $\tup{ok_2}$ and $\tup{ok_3}$ are emitted respectively by $R_2$ and $R_3$, and can be consumed by $R_1$.

\section{Conclusions}\label{sec:conc}

In this paper, we investigate the integration of reversibility with asynchrony
in CCS. As a design choice, we model asynchrony at the semantic level, making
CCSa more closely aligned with real-world programming languages. This approach
is consistent with recent work on session types, where messages in
transit--i.e., messages that have been sent but not yet received--are
explicitly considered in the setting of asynchronous session types~\cite{ChenDSY17,PY26}.

Most existing reversible calculi either assume synchronous communication~\cite{rccs,ccsk,CristescuKV13,GalindoNST21}, or model asynchronous
communication in a restricted way, for instance by disallowing continuations
after output prefixes~\cite{LaneseMS10,CardelliL11}. In this paper, we take a
first step toward defining a reversible semantics for asynchronous variants of
CCS in which message emission and reception are decoupled events. This
separation introduces additional causal dependencies that must be carefully
accounted for in the reversible setting. Existing approaches to derive
reversible semantics for CCS, such as~\cite{rccs,ccsk}, are not directly
applicable and fail to yield a reversible semantics for $\lang$. The static
approach of~\cite{ccsk} fails since the emitting rule does not fit with the
considered SOS format. This is due to the fact the arguments of the rule
change and in the conclusion we have a additional argument, that is the emitted
message. The dynamic approach of~\cite{rccs} fails since the memory mechanism
used to log events is devised for synchronisations, and cannot deal with
partial events such as a message emission. Hence we have devised an ad-hoc
reversible semantics by leveraging the key ideas of~\cite{ccsk}: making static
all the operators of the calculus and using identifiers to mark events. This
leads to the use of half marked messages of the form $\ckey{i}\tup{a}$,
indicating a message in transit, and fully marked messages, of the form
$\ckey{i}\tup{a}\ckey{j}$, indicating a message which has been consumed. We
remark that in the framework of~\cite{ccsk} elements tagged with two keys,
like consumed messages $\ckey{i}\tup{a}\ckey{j}$,  do not exist and contrast
the assumptions of the approach. We have then showed that the obtained
reversible semantics is causally consistent.

Several directions for future work remain open. One possible extension is to
study behavioural equivalences for $\rlang$, for instance by adapting
forward-reverse bisimulation notion~\cite{ccsk} to the asynchronous setting in
the flavour of~\cite{api}. Another direction is to investigate a truly
concurrent semantics for $\lang$ and $\rlang$, along the lines of recent work
on RCCS~\cite{MelgrattiMP24}, \added{or by reinterpreting the encodings of
\cite{BaldanBGM15} with the notion of asynchrony used in this paper in the setting of
reversible Petri nets \cite{MelgrattiMP26}.}

Also, another promising direction is to
investigate the impact of reversibility in variants of CCS equipped with
Linda-like coordination primitives~\cite{BusiGZ98,BusiZ08}, inspired by the
Linda model of shared tuple spaces and compare it with \cite{GiachinoLMT17}. On the practical side, it would be
interesting to explore connections with programming languages that rely on
asynchronous message passing, such as Erlang or Go, in order to better
understand how reversible semantics could support debugging or rollback
mechanisms in distributed systems. This would require to add channel queues,
in the case of Go, or mailboxes, in the case of Erlang, to $\lang$.

\nocite{*}
\bibliographystyle{eptcs}
\bibliography{generic}
\end{document}